\newtheorem{theorem}{Theorem}
\newtheorem{lemma}{Lemma}
\author{Alexander Ageev}
\title{Approximating the $2$-Machine Flow Shop Problem with Exact Delays Taking Two Values}
\begin{document}
\date{}
\maketitle

\begin{abstract} In the $2$-Machine Flow Shop problem
with exact delays the operations of each job are separated by a
given time lag (delay). Leung et al. (2007) established that the
problem is strongly NP-hard when the delays may have at most two
different values. We present further results for this case: we prove
that the existence of $(1.25-\varepsilon)$-approximation implies
P$=$NP and develop a $2$-approximation algorithm.
\end{abstract}

\section{Introduction}
\label{intro} An instance of the $2$-Machine Flow Shop problem with
exact delays consists of $n$ triples $(a_j, l_j, b_j)$ of
nonnegative integers where $j$ is a job in the set of jobs
$J=\{1,\dots , n\}$. Each job $j$ must be processed first on machine
$1$ and then on machine $2$, $a_j$ and $b_j$ are the lengths of
operations on machines $1$ and $2$, respectively. The operation of
job $j$ on machine~2 must start exactly $l_j$ time units after the
operation on machine~1 has been completed. The goal is to minimize
makespan. In the standard three-field notation scheme the problem is
written as $F2\;|\mbox{ exact }l_j|\;C_{\max}$.

One  of evident applications of scheduling problems with exact
delays is chemistry manufacturing where there often may be an exact
technological delay between the completion time of some operation
and the initial time of the next operation. The problems with exact
delays also arise in com\-mand-and-control applications in which a
centralized commander distributes a set of orders (associated with
the first operations) and must wait to receive responses
(corresponding to the second operations) that do not conflict with
any other (for more extensive discussion on the subject, see
\cite{ESS,SS}). Condotta \cite{Con} describes an application related
to booking appointments of chemotherapy treatments \cite{Con}.

The approximability of $F2\;|\mbox{ exact }l_j \;|\;C_{\max}$ was
studied by Ageev and Kononov in \cite{AK}. They proved that the
existence of $(1.5-\varepsilon)$-approxi\-mation algorithm implies
P$=$NP and constructed a $3$-approxi\-mation algorithm. They also
give a $2$-approximation algorithm for the cases when $a_j\leq b_j$
and $a_j\geq b_j$, $j\in J$. These algorithms were independently
developed by Leung et al. in \cite{LLZ}. The case of unit processing
times ($a_j=b_j=1$ for all $j\in J$) was shown to be strongly
NP-hard by Yu \cite{Yu,YHL}. Ageev and Baburin \cite{AB} gave a
$1.5$-approximation algorithm for solving this case.

In this paper we consider the case when $l_j\in \{L_1,L_2\}$ for all
$j\in \{1,\ldots , n\}$. In the three-field notation scheme this
case can be written as $F2\;|\mbox{ exact }l_j\in
\{L_1,L_2\}\;|\;C_{\max}$. The problem was shown to be strongly
NP-hard by Leung et al. \cite{LLZ}.

Our results are the following: we prove that the existence of
$(1.25-\varepsilon)$-approximation for $F2\;|\mbox{ exact }l_j\in
\{0,L\}\;|\;C_{\max}$ implies P$=$NP and present a $2$-approximation
algorithm for $F2\;|\mbox{ exact }l_j\in \{L_1,L_2\}\;|\;C_{\max}$.

Throughout the paper we use the standard notation:
$C_{\max}(\sigma)$ stands for the length of a schedule $\sigma$
(makespan); $C_{\max}^*$ means the length of a shortest schedule.

We assume that no job has a missing operation in the sense that a
zero processing time implies that the job has to visit the machine
for an infinitesimal amount of time $\delta > 0$.

\section{Inapproximability lower bound for \\ $F2\;|\mbox{ exact }l_j\in \{0,L\}\;|\;C_{\max}$}

In this section we establish the inapproximability lower bound for
the case $F2\;|\mbox{ exact }l_j\in \{0,L\}\;|\;C_{\max}$, i.e.,
when $L_1=0$, $L_2=L$. To this end consider the following reduction
from \textsc{Partition} problem.

\medskip
\textsc{Partition}

\label{sec:1} {\bf Instance:} Nonnegative integers $w_1,\ldots,w_m$
such that $\sum_{k=1}^{m}w_k=2S$.

\medskip
{\bf Question:} Does there exist a subset $X\subseteq \{1,\ldots,
m\}$ such that $ \sum_{k\in X}w_k \\ =S$?
\medskip

Consider an instance $\mathcal{I}$ of {\sc Partition} and construct
an instance $\mathcal{I}'$ of $F2\;|\mbox{ exact }l_j\in
\{0,L\}|\;C_{\max}$.

Let $R>5 S$. Set $J=\{1,\ldots ,m+6\}$ and
\begin{eqnarray*}
   a_k&=&b_k=w_k, \; l_k=2R \mbox{ for } \quad k=1,\ldots m, \\
   a_{m+1} &=& b_{m+1}=R,\; l_{m+1}=0, \\
   a_{m+2} &=& b_{m+2}=R, \; l_{m+2}=2R, \\
   a_{m+3} &=& 0, \; b_{m+3}=R-S, \; l_{m+3}=0, \\
   a_{m+4} &=& R-S, \; b_{m+4}=0,\; l_{m+4}=0,\\
  a_{m+5} &=& 0, \; b_{m+5}=R, \;  l_{m+5}=0, \\
  a_{m+6} &=& R,\; b_{m+6}=0,\;  l_{m+6}=0.
\end{eqnarray*}
We will refer to the jobs in $\{1,\ldots, m\}$ as \emph{small} and
to the remaining six jobs as \emph{big}.

\begin{lemma}\ \label{l1}
\begin{description}
  \item[(i)] If $\sum_{k\in X}w_k =S$ for some subset $X\subseteq \{1,\ldots
m\}$, then there exists a feasible schedule $\sigma$ such that
$C_{\max}(\sigma)\leq 4R+4S$.
  \item[(ii)] If there exists a feasible schedule $\sigma$ such that
$C_{\max}(\sigma) \leq 4R+4S$, then $\sum_{k\in X}w_k =S$ for some
set $X\subseteq \{1,\ldots m\}$.
  \item[(iii)] If
$C_{\max}(\sigma) > 4R+4S$ for any feasible schedule $\sigma$, then
$C_{\max}^*\geq 5R-S$.
\end{description}
\end{lemma}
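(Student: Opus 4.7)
The plan is to prove all three parts together by pinning down the forced structure of any near-optimal schedule. The key observation is that because $l_{m+2} = 2R$ is the largest delay and $b_{m+2}=R$, the machine-$2$ operation of job $m+2$ must finish exactly $4R$ after its machine-$1$ start, so any schedule with $C_{\max} \leq 4R+4S$ forces $m+2$'s machine-$1$ start $t$ into the narrow window $[0,4S]$. Once $t$ is pinned down, the placements of $m+1,m+4,m+6$ and the split of the small jobs follow by tight volume accounting.

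For (i), given a partition $X$ with $\sum_{k \in X} w_k = S$, I would set $t = S$ and place on machine $1$, in order, the jobs of $X$, then $m+2$, $m+1$, the jobs of $Y = \{1,\ldots,m\} \setminus X$, $m+4$, and $m+6$, filling $[0, 4R+S]$ with no idle time. Because every $l=2R$ job's machine-$2$ slot is a $+2R$ shift of its machine-$1$ slot, the machine-$2$ block $[2R, 4R+2S]$ is tiled by $X$, $m+1$, $m+2$, $Y$ in four contiguous sub-blocks, while $m+3, m+5$ are placed in the free interval $[0,2R]$ of machine $2$. A routine check of the delay and non-overlap constraints yields $C_{\max} = 4R+2S \leq 4R+4S$.

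For (ii), assume $C_{\max}(\sigma) \leq 4R+4S$ and let $t$ be $m+2$'s machine-$1$ start. The deadline $t + 4R \leq 4R+4S$ gives $t \leq 4S$. A brief case analysis of machine-$2$ non-overlap between $m+1$ and $m+2$, using $R > 5S$ to rule out the swap subcase $s_1 \geq s_2 + 3R$, forces $m+1$ into the slot $[t+R, t+2R]$ adjacent to $m+2$ on machine $1$. Since $t \leq 4S < R-S$, neither $m+4$ nor $m+6$ fits in $[0,t]$, so both lie in the rear block $[t+2R, 4R+S]$; a volume count on machine $1$ then forces the small jobs to split into a subset $X_0 \subseteq \{1,\ldots,m\}$ filling $[0,t]$ exactly and the complementary subset $X_1$ filling the remaining gap, so $\sum_{k \in X_0} w_k = t$. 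The main obstacle is to upgrade $t \leq 4S$ to $t = S$: for this I would exploit the mirror constraint on machine $2$, namely that each $X_1$-job's machine-$2$ finish is pinned to $c_k^1 + 2R + w_k$, which confines $X_1$ to a narrow early sub-interval of the rear block, together with the room that $m+3, m+5$ require in $[0, 2R]$ on machine $2$, to derive $t \geq S$, hence $t = S$ and $X_0$ is the required partition subset.

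For (iii) I would argue by contraposition. If the structure of (ii) is unrealizable then either the adjacency of $m+1,m+2$ fails, in which case the swap subcase forces $C_{\max} \geq s_1 + 2R \geq 5R$; or the adjacency holds but with $t > 4S$, giving $C_{\max} \geq t + 4R$. The step I expect to be most delicate is ruling out the intermediate range $t \in (4S, R-S)$: here I would combine the machine-$1$ idle-time budget with the constraint that $m+3, m+5$ together need $2R-S$ of machine $2$ inside $[0,2R]$, to show that pushing $t$ past $4S$ creates an uncompensated idle block of size at least $R-5S$ on one of the machines, so $C_{\max}$ must jump directly to at least $5R - S$.
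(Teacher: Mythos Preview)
Your argument for (i) contains a genuine error. The claim ``every $l=2R$ job's machine-$2$ slot is a $+2R$ shift of its machine-$1$ slot'' is false: by definition the second operation starts $l_j$ units after the first one \emph{completes}, so a job with $a_j=b_j=w$ and $l_j=2R$ whose first operation occupies $[s,s+w]$ has its second operation in $[s+w+2R,\,s+2w+2R]$, a shift of $w+2R$, not $2R$. Because the $w_k$ differ, the second operations of the jobs in $X$ do \emph{not} tile a block of length $S$; in your layout the last $X$-job finishes machine~1 at time $S$, so its machine-$2$ slot is $[S+2R,\,S+2R+w]$, which overlaps $m+1$'s machine-$2$ slot $[S+2R,S+3R]$ whenever $w>0$. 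Thus your schedule is infeasible, and the asserted bound $C_{\max}=4R+2S$ is unsupported (indeed the lemma only claims $4R+4S$).

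The paper's construction handles exactly this difficulty. It places the machine-$1$ operations of $X$ \emph{after} $m+1$ (in an idle interval $A$ of length $S$) and processes them in non-increasing order of $w_k$; a telescoping count then shows that their machine-$2$ operations, which land after $m+2$'s second operation, occupy an interval $A'$ of length at most $2S$ rather than $S$. By a symmetric placement of $Y$ on the other side of the $4R$-core one gets $C_{\max}\le 4R+4S$. The extra $2S$ on each end is precisely the price of the variable-size shifts you overlooked.

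For (ii) your outline heads in a reasonable direction, but it aims at more than is needed and relies on an unjustified step. You try to prove that $m+2$'s start satisfies $t=S$ exactly and that the small jobs to its left fill $[0,t]$ with no idle time; neither is forced by $C_{\max}\le 4R+4S$ alone (the machine-$1$ idle budget is $3S$, not $0$). The paper's argument is softer: once $m+1$ is pinned inside $m+2$'s lag, it defines $X$ and $Y$ by which side of the core each small job sits on and shows, via the positions of $m+3$ and $m+5$, that both $\sum_{j\in X}w_j\le S$ and $\sum_{j\in Y}w_j\le S$; equality then follows from $\sum w_j=2S$. Your plan for (iii) is close in spirit to the paper's, but it inherits the same over-specific assumption about $t$ from (ii).
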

\begin{proof}

(i) Let $X\subseteq \{1,\ldots, m\}$ such that $ \sum_{k\in
X}w_k=S$. Then $ \sum_{k\in Y}w_k=S$ where $Y=\{1,\ldots,
m\}\setminus X$.

First of all point out that the whole construction presenting a
feasible schedule can be moved along the time line in both
directions. So the length of the schedule is the length of the time
interval between the starting time  of the first operation (which is
not necessarily equal to zero) and the completion time of the last
one.
\begin{figure}[h] \label{ex5}
\vspace{1mm}
\begin{center}
\includegraphics[scale=0.65]{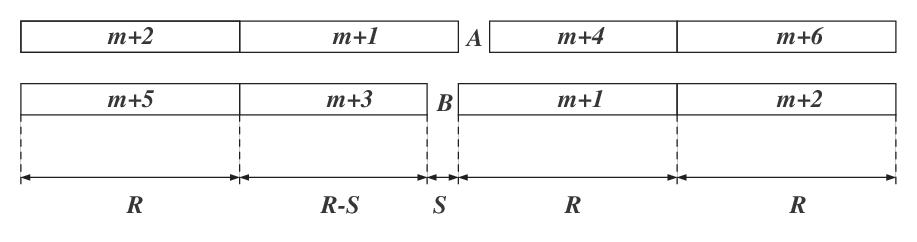}
  \caption{Scheduling the big jobs.}
\end{center}
\end{figure}
To construct the required schedule arrange the big jobs in the order
shown in Fig.~1. This construction has two idle intervals: $A$ on
machine~1 and $B$ on machine~2. The interval $A$ is between the end
of the first operation of job $m+1$ and the beginning of the first
operation of job $m+4$. The interval $B$ is between the end of the
second operation of job $m+3$ and the beginning of the second
operation of job $m+1$. Both intervals have length $S$.

For scheduling the small jobs we use the following rule. Schedule
the small jobs in $X$ in such a way that their first operations are
executed within the time interval $A$ in non-decreasing order of the
lengths. Correspondingly, w.l.o.g. we may assume that
$X=\{1,2,\ldots ,q\}$ and $w_1\leq w_2\leq \ldots \leq w_q$.

Denote by $A'$ the time interval between the end of the second
operation of job $m+2$ and the end of the last operation of job $q$.
It is easy to understand (see Fig.~2) that all the second operations
of jobs $\{1,\ldots , q\}$ fall within $A'$ and the length of $A'$
is equal to
$$\sum_{i=1}^q w_i+w_1
+(w_2-w_1)+(w_3-w_2)+\ldots+(w_q-w_{q-1})= \sum_{i=1}^q w_i+w_q,$$
which does not exceed $2S$.
\begin{figure}[h]
\vspace{1mm}
\begin{center}
\includegraphics[scale=0.6]{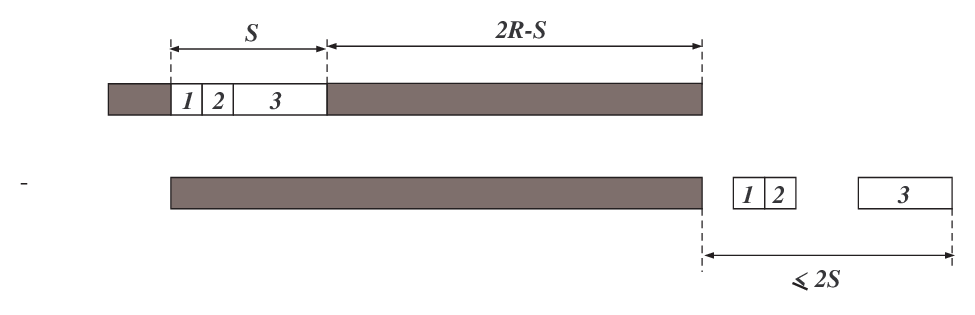}
  \caption{Scheduling the small jobs in $X$.}
\end{center}
\end{figure}
Now we observe that the construction is symmetric and schedule the
jobs in $Y$ quite similarly. More precisely, the second operations
of jobs in $Y$ are executed without interruption within the time
interval $B$ in non-increasing order of their lengths. Then the
first operations of these jobs fall within a time interval of length
at most $2S$.

Finally, we arrive at the schedule shown in Fig.~3. From the above
argument its length does not exceed $4R+4S$, as required.
\begin{figure}[h]
\vspace{1mm}
\begin{center}
\includegraphics[scale=0.6]{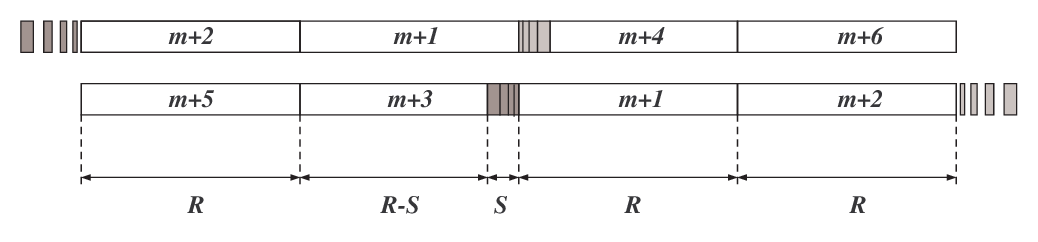}
  \caption{The jobs in $\{1,\ldots, m\}$ are executed
within the shaded intervals.}
\end{center}
\end{figure}

(ii) Let $\sigma$ be a feasible schedule with $C_{\max}(\sigma) \leq
4R+4S$. Observe first that in any schedule of length at most $4R+4S$
both operations of job $m+1$ are executed exactly within the lag
time interval of job $m+2$, since otherwise $C_{\max}(\sigma) \geq
5R$. So for these jobs we have the \emph{initial} construction shown
in Fig.~4. Denote by $t_0, t_1, t_2, t_3, t_4$ the junction times of
the operations of these jobs (see Fig.~4).

\begin{figure}[h]
\vspace{1mm}
\begin{center}
\includegraphics[scale=0.65]{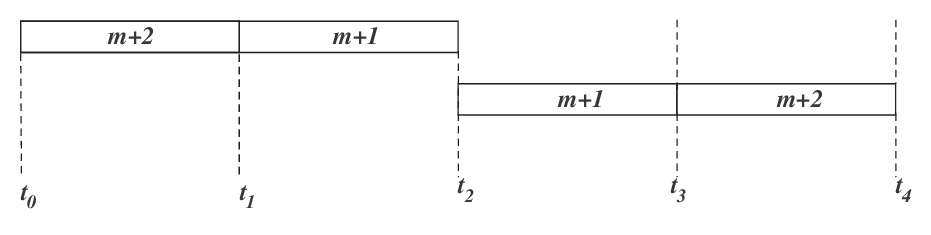}
  \caption{The  initial construction.}
\end{center}
\label{fig4}
\end{figure}
Observe that the schedule $\sigma$ has the following property (Q):
for any small job either it completes at time not earlier than
$t_1$, or starts at time not later than $t_3$. It follows from the
fact that otherwise the length of $\sigma$ is at least $5R$. Let $X$
be the subset of small jobs that complete at time not earlier than
$t_1$. Then $Y=\{1,\ldots , m\}\setminus X$ is the subset of small
jobs that start at time not later than $t_3$. For definiteness
assume that $X\not= \emptyset$. This immediately implies that job
$m+5$ starts at time not later than $t_0$. Then job $m+3$ starts
exactly at time $t_1$, since otherwise the length of $\sigma$ is at
least $5R-S$, which is greater than $4R+4S$ due to the choice of $R$
and $S$. Thus the second operations of the jobs in $X$ are executed
within the interval $[t_2-S,S]$. It follows that $Y\not=\emptyset$.
Moreover, a similar argument shows that the first operations of the
jobs in $Y$ are executed within the interval $[t_2, t_2+S]$. Thus we
have $\sum_{j\in X}{w_j}\leq S$ and $\sum_{j\in Y}{w_j}\leq S$,
which implies $\sum_{j\in X}{w_j}=\sum_{j\in Y}{w_j}=S$, as
required.

(iii) Let $\sigma$ be a feasible schedule. By the assumption of
(iii) $C_{\max}(\sigma)>4R+4S$. We may assume that $\sigma$ contains
the initial construction and satisfies property (Q) (see (ii)),
since otherwise we are done. Let $X$ and $Y$ be defined as in (ii).
From (i) it follows that $\sum_{j\in X}w_j\not=\sum_{j\in Y}w_j$.
W.l.o.g. we may assume that $\sum_{j\in X}w_j>\sum_{j\in Y}w_j$,
i.e., $\sum_{j\in X}w_j>S$. Then by property (Q) neither job $m+3$
nor job $m+5$ starts at time $t_1$. It follows that at least one of
these jobs is executed outside the interval $[t_0,t_4]$, which
implies $C_{\max}(\sigma)\geq 5R-S$ (one of the possible
configurations is shown in Fig.~5). 
\begin{figure}[h]
\vspace{1mm}
\begin{center}
\includegraphics[scale=0.60]{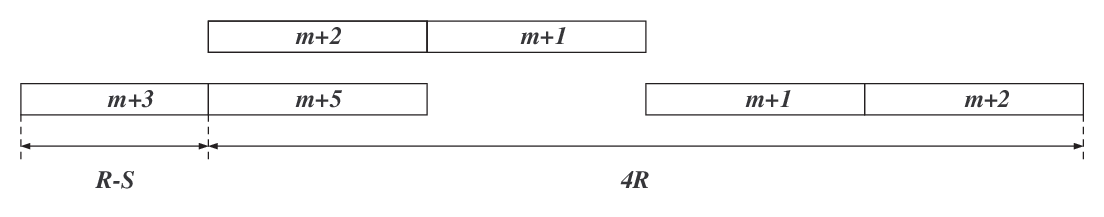}
  \caption{One of the possible configurations for (iii).}
\end{center}
\end{figure}
\end{proof}
Set $R=kS$. Then $5R-S=S(5k-1)$. On the other hand,
$4R+4S=4kS+4S=S(4k+4)$. The fraction
$$\frac{5k-1}{4k+4}$$
tends to $1.25$ as $k$ tends to infinity. Thus Lemma \ref{l1}
implies
\begin{theorem}
If the problem $F2\;|\mbox{ exact }l_j\in \{0,L\}\;|\;C_{\max}$
admits a $(1.25-\varepsilon)$-approximation algorithm, then $P=NP$.
\qed
\end{theorem}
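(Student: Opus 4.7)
The plan is to turn Lemma~\ref{l1} directly into a gap-reduction from \textsc{Partition}. Parts (i)--(iii) already furnish a clean dichotomy for the optimal makespan of the constructed instance $\mathcal{I}'$: if the given \textsc{Partition} instance $\mathcal{I}$ is a YES-instance, part (i) gives $C_{\max}^*(\mathcal{I}') \leq 4R+4S$, while if $\mathcal{I}$ is a NO-instance, parts (ii) and (iii) together force $C_{\max}^*(\mathcal{I}') \geq 5R-S$ (part (ii) rules out any schedule of length $\leq 4R+4S$, and part (iii) says that the next achievable length jumps up to $5R-S$). So the ratio between the ``NO'' lower bound and the ``YES'' upper bound is exactly $\frac{5R-S}{4R+4S}$.

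Next I would parametrize $R=kS$ with $k>5$ (as the construction requires $R>5S$) and compute $\frac{5R-S}{4R+4S}=\frac{5k-1}{4k+4}$, whose limit as $k\to\infty$ is $5/4$. Hence for any fixed $\varepsilon>0$ one can choose $k=k(\varepsilon)$ large enough that $\frac{5k-1}{4k+4}>1.25-\varepsilon$.

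Finally, suppose for contradiction a polynomial-time $(1.25-\varepsilon)$-approximation algorithm $\mathcal{A}$ exists. Given an arbitrary \textsc{Partition} instance $\mathcal{I}$, build $\mathcal{I}'$ using the chosen $k$; since $k$ depends only on $\varepsilon$ and the job data are linear in $S$ and the $w_i$, this reduction is polynomial. Run $\mathcal{A}$ on $\mathcal{I}'$ and compare its output to the threshold $(1.25-\varepsilon)(4R+4S)$: an output at most this threshold certifies $C_{\max}^*(\mathcal{I}')\leq 4R+4S$ (hence a YES-instance via (ii)), while an output above it certifies $C_{\max}^*(\mathcal{I}')>4R+4S$, which by the gap forces $C_{\max}^*(\mathcal{I}')\geq 5R-S$ and hence a NO-instance. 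This decides \textsc{Partition} in polynomial time, giving P$=$NP.

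There is essentially no obstacle to clear here: Lemma~\ref{l1} does all the combinatorial work, so the ``hard part'' is simply the bookkeeping that the gap $(5k-1)/(4k+4)\to 1.25$ and that the reduction remains polynomial for fixed $\varepsilon$.
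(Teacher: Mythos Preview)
Your proposal is correct and follows exactly the paper's approach: it invokes Lemma~\ref{l1} to obtain the gap between $4R+4S$ and $5R-S$, sets $R=kS$, and lets the ratio $\frac{5k-1}{4k+4}\to 1.25$ drive the inapproximability via a standard gap argument. The paper's own proof is in fact more terse than yours (it records only the ratio computation and declares that Lemma~\ref{l1} implies the theorem), so your write-up simply makes explicit the routine gap-reduction bookkeeping that the paper leaves implicit.
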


\section{A $2$-approximation algorithm for \\ $F2\;|\mbox{ exact }l_j\in \{L_1,L_2\}\;|\;C_{\max}$}

In this section we present a simple $2$-approximation algorithm for
solving $F2\;|\mbox{ exact }l_j\in \{L_1,L_2\}\;|\;C_{\max}$.

We show first that the case when the delays are the same for all
jobs ($L_1=L_2=L$) is polynomially solvable. Note that any feasible
schedule $\sigma$ of an instance of $F2\;|\mbox{ exact
}l_j=L\;|\;C_{\max}$ can be associated with a feasible schedule
$\sigma'$ of the corresponding instance of $F2\;|\mbox{ exact
}l_j=0\;|\;C_{\max}$ and their lengths satisfy
$C_{\max}(\sigma)=C_{\max}(\sigma')+L$. More precisely, shifting the
second operations of all jobs to the left by distance $L$ gives a
feasible schedule to the problem with zero delays, and vise versa
(see Fig.~6). The problem $F2\;|\mbox{ exact }l_j=0\;|\;C_{\max}$
(all delays are equal to $0$) is nothing but the 2-machine no-wait
Flow Shop problem. The latter problem is known to be solvable in
$O(n\log n)$ time \cite{GG,GLS,BDDVW}. Therefore the problem
$F2\;|\mbox{ exact }l_j=L|\;C_{\max}$ is solvable in $O(n\log n)$
time for all $L\geq 0$.

\begin{figure}[t]
\vspace{1mm}
\begin{center}
\includegraphics[scale=0.75]{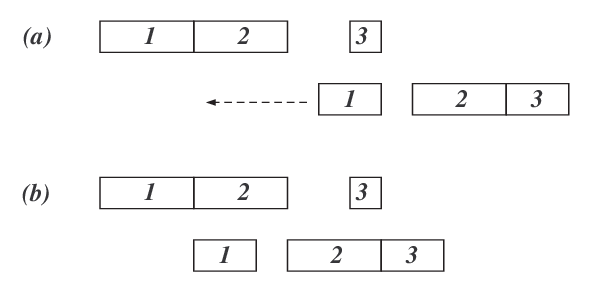}
  \caption{(a) A schedule with the same delay $L$ for all jobs; (b) the corresponding schedule
  with delay $0$ for all jobs. }
\end{center}
\end{figure}

Let $\mathcal{I}_1, \mathcal{I}_2$ be instances of $F2\;|\mbox{
exact }l_j|\;C_{\max}$ with disjoint sets of jobs $J_1$ and $J_2$.
Let $\sigma_k$, $k=1,2$, be feasible schedules of $\mathcal{I}_k$,
respectively. Consider an instance $\mathcal{I}$ of $F2\;|\mbox{
exact }l_j|\;C_{\max}$ formed by the union of $J_1$ and $J_2$.
Denote by $\sigma_1\oplus\sigma_2$ the schedule of $\mathcal{I}$
obtained from $\sigma_1$ and $\sigma_2$ by \emph{concatenation} of
schedules $\sigma_1$ and $\sigma_2$. More precisely, the schedule
$\sigma_1\oplus\sigma_2$ first executes the jobs in $J_1$ according
to the schedule $\sigma_1$ and then as earlier as possible starts
executing the jobs in $J_2$ according to the schedule $\sigma_2$. An
example with $J_1=\{(3,2,3), (2,2,4)\}$ and
$J_2=\{(4,0,2),(5,0,1)\}$ is depicted in Fig.~7.
\begin{figure}[h]
\vspace{1mm}
\begin{center}
\includegraphics[scale=0.65]{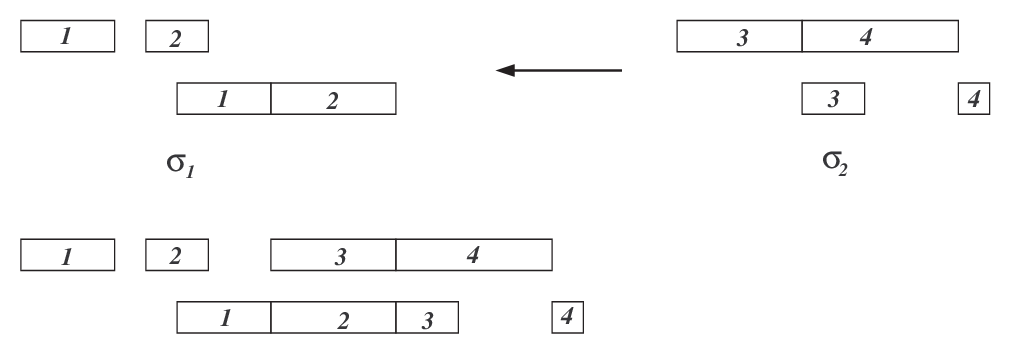}
  \caption{The schedule shown below is the concatenation of $\sigma_1$ and $\sigma_2$.}
\end{center}
\label{fig5}
\end{figure}

We now give a description of an approximation algorithm for the
problem $F2\;|\mbox{ exact }l_j\in \{L_1,L_2\}\;|\;C_{\max}$.

\medskip

\textbf{Algorithm} \textsc{Concatenation}

\medskip

Input: An instance $\{(a_j,l_j,b_j): j\in J\}$, $l_j\in
\{L_1,L_2\}$.

Output: A feasible schedule $\sigma$.

\medskip

1. Set $J_k=\{j\in J: l_j=L_k$, $k=1,2$\}. For $k=1,2$ form the
instances $\mathcal{I}_k=\{(a_j,L_k, b_j): j\in J_k\}$ of
$F2\;|\mbox{ exact }l_j=L\;|\;C_{\max}$.

2. Solve the instances $\mathcal{I}_k$, $k=1,2$. Let $\sigma_k$,
$k=1,2$, be optimal schedules of $\mathcal{I}_k$, respectively.

3. Set $\sigma=\sigma_1\oplus\sigma_2$.

\medskip

As mentioned above the time complexity of Step 2 is $O(n\log n)$. So
the overall running time of Algorithm \textsc{Concatenation} is
$O(n\log n)$.

The approximation bound is derived from the following easy lemmata.

\begin{lemma}
Let $C^*_{\max}$ be the length of an optimal schedule to the
instance $\{(a_j,l_j,b_j): j\in J\}$, $l_j\in \{L_1,L_2\}$. Then
$C_{\max}(\sigma_k)\leq C^*_{\max}$ for $k=1,2$.
\end{lemma}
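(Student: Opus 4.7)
The plan is to show that $\sigma_k$ competes with a natural restriction of the optimal schedule $\sigma^*$ for the full instance. Fix $k\in\{1,2\}$ and let $\sigma^*$ be an optimal schedule for $\{(a_j,l_j,b_j):j\in J\}$ with makespan $C^*_{\max}$.

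First I would extract from $\sigma^*$ the partial schedule $\sigma^*|_{J_k}$ obtained by keeping the starting times of only those operations that belong to jobs $j\in J_k$ (and discarding all operations of jobs in $J_{3-k}$). Since the delay condition $l_j$ and the processing times $a_j,b_j$ are job-local, and since machine-capacity constraints can only be relaxed when operations are removed, $\sigma^*|_{J_k}$ is a feasible schedule for the instance $\mathcal{I}_k=\{(a_j,L_k,b_j):j\in J_k\}$. Its makespan, measured as the difference between the last completion time and the first starting time appearing among the retained operations, is at most $C^*_{\max}$.

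Next, since $\sigma_k$ is an optimal schedule for $\mathcal{I}_k$ (Step~2 of the algorithm, using the $O(n\log n)$ procedure for the case of a single delay value discussed above), it satisfies
\[
C_{\max}(\sigma_k)\;\le\; C_{\max}(\sigma^*|_{J_k})\;\le\; C^*_{\max},
\]
which is what was to be shown.

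There is no real obstacle here; the only point that requires a sentence of care is that removing operations from a feasible schedule preserves feasibility (the exact-delay constraints are per-job and the two operations on each machine are still processed without overlap), so that the restriction is indeed a valid schedule for $\mathcal{I}_k$.
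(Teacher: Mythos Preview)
Your argument is correct and essentially identical to the paper's own proof: the paper deletes from an optimal schedule $\sigma^*$ all jobs of $J_{3-k}$ to obtain a feasible schedule $\sigma'$ for $\mathcal{I}_k$, and then uses optimality of $\sigma_k$ to get $C_{\max}(\sigma_k)\le C_{\max}(\sigma')\le C^*_{\max}$. Your extra sentence justifying that deletion preserves feasibility is a welcome clarification but does not change the route.
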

\begin{proof} Evident. 
\end{proof}
\begin{lemma}
$C_{\max}(\sigma)\leq C_{\max}(\sigma_1)+C_{\max}(\sigma_2)$.
\end{lemma}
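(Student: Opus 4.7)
My plan is to verify the inequality directly from the definition of the concatenation operator $\oplus$ given just above the lemma statement, using the fact that the exact-delay constraints are translation-invariant along the time axis.

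First I would fix the starting time of $\sigma_1$ to be $0$, so that $\sigma_1$ occupies the interval $[0, C_{\max}(\sigma_1)]$ and every job in $J_1$ is processed inside this interval with its exact delay respected. By the definition of $\sigma_1 \oplus \sigma_2$, the jobs in $J_2$ are executed according to $\sigma_2$ shifted so that its earliest operation starts no earlier than $C_{\max}(\sigma_1)$; equivalently, the entire $\sigma_2$-portion is translated by a nonnegative amount into the interval starting at $C_{\max}(\sigma_1)$.

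Next I would observe that shifting all operations of $\sigma_2$ by the same constant preserves feasibility, because the constraints $a_j,l_j,b_j$ depend only on the relative positions of the two operations of job $j$, not on their absolute times, and because the jobs in $J_1$ and $J_2$ are disjoint so no new conflict on machines $1$ or $2$ can arise from the shift. Hence $\sigma_1 \oplus \sigma_2$ is a feasible schedule of $\mathcal{I}$.

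Finally, the last operation of $\sigma$ is either the last operation of $\sigma_1$, which finishes at time $C_{\max}(\sigma_1)$, or the last operation of the shifted $\sigma_2$, which finishes at time at most $C_{\max}(\sigma_1) + C_{\max}(\sigma_2)$. Taking the maximum yields $C_{\max}(\sigma) \leq C_{\max}(\sigma_1) + C_{\max}(\sigma_2)$, as required. There is no real obstacle here; the only point worth stating cleanly is the translation invariance of the exact-delay constraints, which is what makes the naive sequential execution of two feasible schedules feasible.
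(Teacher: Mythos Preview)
Your argument is correct and is exactly the approach the paper takes: the paper's own proof is the single line ``Follows from the definition of operation $\oplus$,'' and you have simply unpacked that definition together with the translation invariance that makes the concatenated schedule feasible. Nothing further is needed.
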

\begin{proof} Follows from the definition of operation $\oplus$.
\end{proof}
From Lemmata~2 and 3 we have
$$
C_{\max}(\sigma)\leq 2C^*_{\max}.
$$
Thus we arrive at the following
\begin{theorem}
Algorithm \textsc{Concatenation} runs in time $O(n\log n)$ and finds
a feasible schedule of $F2\;|\mbox{ exact }l_j\in
\{L_1,L_2\}\;|\;C_{\max}$ whose length is within a  factor of $2$ of
the optimum. \qed
\end{theorem}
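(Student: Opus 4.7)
The plan is to verify the two assertions of the theorem separately, each by composing results already established earlier in the paper.

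For the running time, I would first observe that Step~1 only partitions the jobs by their delay value and builds two instances, which takes $O(n)$ time. Step~2 solves two independent instances of $F2\;|\mbox{ exact }l_j=L\;|\;C_{\max}$, each with at most $n$ jobs; by the discussion preceding the algorithm, each such instance reduces (via shifting all second operations by $L$) to a 2-machine no-wait Flow Shop instance, which is solvable in $O(n\log n)$ time by the Gilmore--Gomory algorithm cited in the paper. Step~3 concatenates the two resulting schedules, which is an $O(n)$ operation. Adding these contributions gives the claimed $O(n\log n)$ bound.

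For the approximation guarantee, I would simply chain Lemmas~1 and~2. Lemma~1 says $C_{\max}(\sigma_k)\le C^*_{\max}$ for $k=1,2$, since restricting any schedule of the full instance to the jobs of $J_k$ produces a feasible schedule of $\mathcal{I}_k$ of no greater length. Lemma~2 bounds the concatenated schedule's length by the sum of the lengths of its two components. Combining them,
\begin{equation*}
C_{\max}(\sigma)\le C_{\max}(\sigma_1)+C_{\max}(\sigma_2)\le 2C^*_{\max},
\end{equation*}
which is exactly the required factor~$2$ bound.

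I do not expect any genuine obstacle: the theorem is essentially a bookkeeping conclusion from the two preceding lemmas plus the algorithmic ingredients already spelled out in the text. The only point worth stating explicitly is that the length-preserving (up to an additive $L$) reduction from the exact-delay case with a single delay to the zero-delay case is what lets us invoke the $O(n\log n)$ no-wait Flow Shop subroutine in Step~2; everything else is immediate.
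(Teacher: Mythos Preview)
Your proposal is correct and mirrors the paper's own argument almost verbatim: the paper likewise notes that Step~2 costs $O(n\log n)$ via the reduction to the no-wait Flow Shop problem and then combines the two lemmas (which, in the paper's global numbering, are actually Lemmas~2 and~3) to obtain $C_{\max}(\sigma)\le C_{\max}(\sigma_1)+C_{\max}(\sigma_2)\le 2C^*_{\max}$. Your breakdown of Steps~1 and~3 as $O(n)$ is slightly more explicit than the paper's one-line running-time remark, but the approach is identical.
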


\medskip

\noindent \textbf{Tightness}.

\smallskip

\noindent For $a>\delta >0$, consider the instance $\{(a_j,l_j,b_j):
j\in J\}$, $l_j\in \{L_1,L_2\}$,  consisting of $k$ jobs $(a,0,a)$
and a single job $(\delta, (k+1)a, \delta)$. It is clear that
$$C_{\max}^*=(k+1)a+2\delta$$
(see Fig. 8 for the optimal schedule). Let $\sigma_{conc}$ denote
the schedule returned by Algorithm \textsc{Concatenation}. Then
evidently
$$C_{\max}(\sigma_{conc})=ka+2\delta+(k+1)a.$$
\begin{figure}[h!]
\vspace{1mm}
\begin{center}
\includegraphics[scale=0.65]{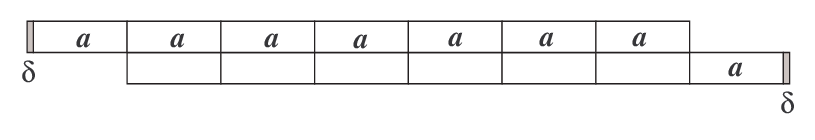}
  \caption{The optimal schedule.}
\end{center}
\end{figure}
Thus we have

$$\frac{C_{\max}(\sigma_{conc})}{C_{\max}^*}=
\frac{ka+2\delta+(k+1)a}{(k+1)a+2\delta}=1+\frac{ka}{(k+1)a+2\delta}$$
which tends to $2$ when $k$ tends to $\infty$. \qed

\end{document}